\documentclass[
]{ceurart}

\usepackage[utf8]{inputenc}
\usepackage[T1]{fontenc}
\usepackage{ceurstuffthm}
\usepackage{cwdefs}
\usepackage{pgfkeys}
\usepackage{plabels}
\usepackage{array}
\usepackage{longtable}

\cfoot[\thepage]{\thepage} %

\usepackage{listings}
\lstset{breaklines=true}

\newcommand{\LL}{\mathsf{L}}
\newcommand{\RR}{\mathsf{R}}

\newcommand{\seq}{\Rightarrow}
\newcommand{\seqh}[1]{\;\stackrel{#1}{\Longrightarrow}\;}

\newcommand{\tabrulename}[1]{\raisebox{-2.0ex}[0pt][0pt]{\small #1}}

\newcommand{\rulename}[1]{\textsc{#1}}

\renewcommand{\X}{\mathrm{\Gamma}}
\renewcommand{\Y}{\mathrm{\Delta}}
\newcommand{\AX}{\bigwedge\mathrm{\Gamma}}
\newcommand{\VY}{\bigvee\mathrm{\Delta}}

\newcommand{\htform}{HT-formula\xspace}
\newcommand{\htforms}{HT-formulas\xspace}
\newcommand{\nhnnf}{NHNNF-formula\xspace}
\newcommand{\nhnnfs}{NHNNF-formulas\xspace}

\newcommand{\nh}{\f{nh}}
\newcommand{\vocpn}{\f{voc}^\pm}

\newcommand{\NF}{\f{NF}}
\newcommand{\F}{\f{F}}
\newcommand{\T}{\f{T}}

\newcommand{\xequiv}{\;\equiv\;}

\begin{document}

\setlength\abovedisplayshortskip{0.0pt plus 1.0pt minus 1.0pt}
\setlength\abovedisplayskip{9.0pt plus 3.0pt minus 5.0pt}
\setlength\belowdisplayskip{9.0pt plus 3.0pt minus 5.0pt}

\copyrightyear{2026}
\copyrightclause{Copyright for this paper by its author.
  Use permitted under Creative Commons License Attribution 4.0
  International (CC BY 4.0).}

\copyrightclause{Copyright for this paper by its author.}

\conference{CI-BD-SOQE 2026: Workshop on Craig Interpolation, Beth
  Definability, and Second-Order Quantifier Elimination, at FLoC 2026: The
  9th Federated Logic Conference, July 24--25, 2026, Lisbon, Portugal}

\ccLogoWidth=0.01pt

\title{Craig-Lyndon Interpolation for the Logic of Here and There with a Variation of
  Mints' Sequent System\\ (Extended Abstract)}

\author{Christoph Wernhard}

\address{University of Potsdam, Germany}

\begin{abstract}
  We present a variation of Maehara's method to construct Craig-Lyndon
  interpolants for the three-valued propositional logic of here and there
  (HT), also known as Gödel's $G_3$, a superintuitionistic logic of importance
  in logic programming. Our method adapts a recent interpolation technique
  that operates on classically encoded logic programs to a variation of Mints'
  sequent system for HT. The approach is characterized by two stages: First, a
  preliminary interpolant is constructed, a formula that is an interpolant in
  some sense but not yet the desired HT formula. In the second stage, an
  actual HT interpolant is obtained from this preliminary interpolant. With
  the classical encoding, the preliminary interpolant is a classical
  Craig-Lyndon interpolant for classical encodings of the two input HT
  formulas. In the presented adaptation, the sequent system operates directly
  on HT formulas, and the preliminary interpolant is in a nonclassical logic
  that generalizes HT by an additional logic operator.
\end{abstract}

\begin{keywords}
  Craig interpolation \sep Sequent systems \sep Maehara's method \sep
  Logic of here and there (HT) \sep Gödel's logic $G_3$
\end{keywords}  

\maketitle

\section{Introduction}

The propositional \name{logic of here and there (HT)}, also known as Gödel's
$G_3$, is a superintuitionistic, or intermediate, logic, which plays an
important role in logic programming. Central to this role is the observation,
due to Lifschitz, Pearce and Valverde \cite{LifschitzEtAl2001}, that for
answer set programs under the stable model semantics a particularly useful
notion of program equivalence, \name{strong equivalence}, can be characterized
as equivalence in HT.
We consider here \emph{Craig-Lyndon interpolation} for HT, aiming at
applications in knowledge representation and program synthesis such as
described, e.g., in
\cite{gpv:interpolable:2011,pearce:valverde:synonymous:2012,hw:2024}. As noted
in \cite{gpv:interpolable:2011}, results by Maksimova
\cite{maksimova:superintutionistic:1977,maksimova:superintutionistic:1979,gabbay:maximova:book}
prove that HT has the Craig interpolation property, i.e., for two HT formulas
such that one can be inferred from the other, a Craig interpolant exists.
Baaz and Veith \cite[Section 4.1.1]{baaz:veith:fuzzy:1999} have shown that
uniform interpolants for HT can be constructed as the disjunction over the
formulas obtained from the given formula by substituting the atoms to
eliminate in all possible ways with atoms to be retained and truth value
constants for true and false.

Recently \cite{hw:2024}, Craig-Lyndon interpolation for HT was proven with a
practically applicable construction of HT interpolants from clausal tableau
proofs, which can be obtained from theorem provers for classical first-order
logic. The underlying proof system, clausal tableaux, is for classical
first-order logic and it is applied to HT formulas under a specific encoding
into classical logic. This encoding is well established as a technique for
mechanically proving strong equivalence of two given logic programs with
theorem provers for classical logic. The core idea of the encoding is to
represent an HT predicate by two classical predicates, one for the ``here''
world and the other for the ``there'' world.

Since \cite{hw:2024} mainly addresses logic programming, HT interpolation is
expressed in that paper as ``LP-interpolation'' on the level of first-order
formulas that encode HT formulas \cite[Theorem~11]{hw:2024}. The transfer to
interpolation directly on ``logic programs'', i.e., HT formulas in a certain
normal form, is straightforward and made explicit in the Prolog implementation
that accompanies the paper (predicate \texttt{p\_interpolant/4}). Since any HT
formula can be normalized into the special case of a ``logic program'', as
outlined by Mints \cite{mints}, the method of \cite{hw:2024} is applicable
w.l.o.g. to HT in full. \pagebreak Moreover, the method from \cite{hw:2024} actually
applies not just to propositional HT but also to a first-order generalization
of HT. As outlined in \cite{hw:2024} and also implemented, it can operate
either directly on clausal first-order tableaux or on resolution proofs, which
are converted to clausal tableaux.

With respect to practical application, the method of \cite{hw:2024} provides
exactly what is needed: powerful and highly optimized stock first-order
provers can be utilized to compute HT interpolants. However, the method does
not fit immediately into research threads that are centered around
\emph{sequent systems}. Hence, our objective here is the transfer of the
method to the sequent system perspective, without involvement of a classical
encoding.

With the method, an HT interpolant is constructed in two stages: first, a
preliminary interpolant is constructed, a formula that is an interpolant in
some sense but not yet the desired HT formula. In the second stage, an actual
HT interpolant is obtained from this preliminary interpolant.
In the ``classical encoding variation'' of the method, as described in
\cite{hw:2024}, the preliminary interpolant is a classical Craig-Lyndon
interpolant for classical encodings of the two input HT formulas. A
postprocessing operation then yields the classical encoding of an HT formula
that is an interpolant of the original HT inputs. Decoding the latter formula
into an actual HT formula is then a trivial matter.
In the ``sequent system variation'', described here, the input HT formulas are
directly processed by a sequent system for HT, a modification of the
propositional fragment of Mints' system for HT \cite{mints}. The sequent proof
allows calculation of a preliminary interpolant in a logic that generalizes
HT. The postprocessing operation then converts this to a proper HT
interpolant.

The rest of this short paper is structured as follows: We specify the
considered logics (Sect.~\ref{sec-considered-logics}), and present axioms and
rules that extend Mints' system (Sect.~\ref{sec-axioms-and-rules}). After
describing Maehara's interpolation method (Sect.~\ref{sec-maehara-intro}), we
use it to define an interpolating sequent system that constructs for given HT
formulas an interpolant in a logic that extends HT
(Sect.~\ref{sec-interpolating-nhnnf}). We then show, how this preliminary
interpolant can be converted to the desired HT interpolant
(Sect.~\ref{sec-ht-interpolation}). We conclude by noting issues for further
investigation (Sect.~\ref{sec-conclusion}).
Examples of HT interpolation applied to synthesize strongly equivalent logic
programs are provided in \cite[Examples 13 and 16]{hw:2024}.

\section{Logics}
\label{sec-considered-logics}

We consider the three-valued propositional logic of here and there (HT) and
variations of this logic. Truth values are $\F$ (\name{false}), $\T$
(\name{true}), and $\NF$ (\name{not false}). Logic operators are the
well-known $\false,\true,\lnot,\lor,\land,\imp$ along with the unary operator
$\nh$ (,,\name{not here}''). Their semantics is specified with the truth
tables in Table~\ref{table-truth-tables}.
A \defname{literal} is either an atom or the
negation $\lnot p$ of an atom $p$.
The \defname{polarity} of a subformula occurrence is \defname{positive}
(\defname{negative}) if it is in the scope of an even (odd) number of
occurrences of $\lnot$, $\nh$, and, with respect to its first argument,
$\imp$. The \defname{polarity-aware vocabulary} $\vocpn(A)$ of a formula $A$
is defined as the set of all pairs $\la +, p\ra$ where $p$ is an atom with a
positive occurrence in $A$ and all pairs $\la -, p\ra$ where $p$ is an atom
with a negative occurrence in $A$. For example, $\vocpn((\fp \land \lnot \fq)
\imp (\nh(\fr) \lor \fs \lor \lnot s)) = \{\la -, p\ra, \la +, q\ra, \la -,
r\ra, \la +, s\ra, \la -, s\ra\}$.

We quietly assume that semantic assertions with free formula parameters hold
for ``all'' formulas, which includes any formulas that could be built with an
extended truth-functionally complete set of operators.
Our semantics allows the synonymous use of: (a)~$A$ entails $B$; (b)~$A
\entails B$; (c)~$A \imp B$ is valid; (d)~Under all assignments of the atoms
in $A$ or $B$ to truth values from $\{\F, \NF, \T\}$ the truth value of $A
\imp B$ is $\T$.\footnote{\label{foot-entailment}Entailment is understood
\emph{comparative} \cite{bendova:2005}: $A \entails B$ iff under all truth
value assignments the value of $A$ is less than or equal to that of $B$, where
$\F \leq \NF \leq \T$. For HT, which has a classical deduction theorem, this
is equivalent to \emph{preservation of the designated value}: under all truth
value assignments, if the value of $A$ is $\T$, then that of $B$ is $\T$.}
We use $A \equi B$ to abbreviate $(A \imp B) \land (B \imp A)$ and
synonymously use: (a)~$A$ and $B$ are equivalent; (b)~$A \equiv B$; (c)~$A
\equi B$ is valid; (d)~Under all assignments of the atoms in $A$ or $B$ to
truth values from $\{\F, \NF, \T\}$ the truth value of $A \equi B$ is $\T$;
(e)~Under all assignments of the atoms in $A$ or $B$ to truth values from
$\{\F, \NF, \T\}$ the truth value of $A$ is the same as the truth value of
$B$.

\begin{table}
\caption{Truth tables for the considered logic operators. They extend the
  valuations for HT by the specification of $\nh(A)$.}
\label{table-truth-tables}
\centering
  $\begin{array}[t]{c@{\hspace{1em}}c}
       \false & \true\\\midrule
       \F    &  \T
     \end{array}
     \hspace{2em}
     \begin{array}[t]{c|@{\hspace{1em}}c@{\hspace{1em}}c}
       A & \lnot A & \nh(A)\\\midrule
       \F     & \T   &  \T\\
       \NF    & \F   &  \T\\
       \T     & \F   &  \F\\
     \end{array}
     \hspace{2em}
\begin{array}[t]{cc|@{\hspace{1em}}c@{\hspace{1em}}c@{\hspace{1em}}c}
  A & B  & A \lor B & A \land B & A \imp B\\\midrule
  \F & \F  &   \F      &  \F        & \T\\
  \F & \NF &   \NF     &  \F        & \T\\
  \F & \T  &   \T      &  \F        & \T\\
  \NF & \F &   \NF     &  \F        & \F\\
  \NF & \NF &  \NF     & \NF        & \T\\
  \NF & \T &   \T      & \NF        & \T\\
  \T & \F  &   \T      &  \F        & \F\\
  \T & \NF &   \T      & \NF        & \NF\\
  \T & \T  &   \T      &  \T        & \T
\end{array}$
\end{table}

\pagebreak
The formulas in which we are ultimately interested and for which we will
develop our main result on Craig-Lyndon interpolation are those of HT, defined
with the following grammar.

\medskip
{\setlength{\tabcolsep}{0pt}
\begin{tabular}{p{8em}l}
\defname{\htform}: & $A, B := \g{Atom} \mid \false \mid
\true \mid \lnot A  \mid (A \lor B) \mid (A \land B) \mid (A \imp B).$
\end{tabular}
}
\medskip

\noindent
We include the truth value constants $\true, \false$, which does not extend
expressiveness, as $\false \equiv (A \land \lnot A) \equiv \lnot \true$ and
$\true \equiv \lnot \false$. At least one of $\true$ or $\false$ is required
to take account of interpolation for formulas with disjoint vocabularies.
Our construction of preliminary interpolants will involve a further formula
class, without $\imp$ but with $\nh$ and in a negation normal form:

\medskip
{\setlength{\tabcolsep}{0pt}
\begin{tabular}{p{8em}l}
\defname{\nhnnf}: & $A, B := \g{Atom} \mid \lnot \g{Atom} \mid
\lnot \lnot \g{Atom} \mid \nh(\g{Atom}) \mid \false \mid \true \mid (A \lor B)
\mid (A \land B).$
\end{tabular}
}
\medskip

Intuitively, $\nh(A)$ is true (i.e., $\T$) if and only if $A$ is false in the
\name{here} world, independently of its value in the \name{there} world, and
false (i.e., $\F$) otherwise. We consider the $\nh$ operator because a
variation of Maehara's method allows from proofs for HT-formulas the
extraction of Craig-Lyndon interpolants that are \nhnnfs. In a postprocessing
step these can be converted to \htforms. The $\nh$ operator has the following
properties.
\begin{align}
\lnot A \imp \nh(A) & \xequiv \true.\\
\nh(A) \lor A & \xequiv \true.\\
  \label{eq-nh-in-not}
  \nh(\lnot A) & \xequiv \lnot \lnot A.\\
  \label{eq-nh-in-and}
  \nh(A \land B) & \xequiv \nh(A) \lor \nh(B).\\
  \label{eq-nh-in-or}
\nh(A \lor B) & \xequiv \nh(A) \land \nh(B).\\
\label{eq-imp-nh}  
A \imp B & \xequiv (\nh(A) \lor B) \land (\lnot A \lor \lnot \lnot B).
\end{align}

\section{Axioms and Rules}
\label{sec-axioms-and-rules}

Our starting point is the propositional fragment of Mints' sequent system for
HT \cite{mints}. We assume straightforward restrictions of its axioms and also
consider some additional axioms and rules.

\paragraph{Restrictions of Mints' Axioms.}
\label{sec-mints-axioms}

We assume the two axioms of Mints' system for HT with the annotated
restrictions. 
\[
\begin{array}{l@{\hspace{2em}}l@{\hspace{2em}}l}
{\small \rulename{Ax-1}} & A, \X \seq \Y, A & A \text{ a literal}\\
{\small \rulename{Ax-2}} & A, \lnot A, \X \seq \Y & A \text{ an atom}
\end{array}
\]

\paragraph{Axioms and Rules for Truth Value Constants.}

Axioms and rules for handling $\top$ and $\false$ are straightforward and
omitted in the presentation.

\paragraph{Axioms and Rules for $\nh$.}
\label{sec-axrules-nh}
We have the following axioms and rule.
\[
\begin{array}{l@{\hspace{2em}}l@{\hspace{2em}}l}
  {\small \rulename{Ax-nh-1}} & \X \seq \Y, A, \nh(A)  & A \text{ an atom}\\
  {\small \rulename{Ax-nh-2}} & \lnot A, \X \seq \Y, \nh(A) & A \text{ an atom}\\
\end{array}
\]
\[
\begin{array}{r@{\hspace{2em}}c}
  \tabrulename{${\lnot}\nh{\seq}$} &
  \X \seq \nh(A), \Y\\\cmidrule{2-2}
  & \lnot \nh(A), \X \seq \Y
\end{array}
\]
In addition, we have rules for eliminating $\nh$ over negation, based on
equivalence~(\ref{eq-nh-in-not}), and for moving $\nh$ inward, based on
equivalences~(\ref{eq-nh-in-and}) and (\ref{eq-nh-in-or}). The respective
rules are straightforward and omitted from the presentation.

Soundness of \rulename{Ax-nh-1} and \rulename{Ax-nh-2} follows since the
corresponding implications $B \imp (C \lor A \lor \nh(A))$ and $(\lnot A \land
B) \imp (C \lor \nh(A))$ are both valid. Soundness of $\lnot\nh{\seq}$ follows
from the equivalence $(\lnot \nh(A) \land B) \imp C\; \equiv\; B \imp (\nh(A)
\lor C)$.

\paragraph{A Third New Rule for Implication.}

Mints' system handles implication with two rules newly introduced in
\cite{mints}. We introduce a third new rule, which is, like Mints'
${\seq}{\imp}$, for implication in the succedent. We will use it in the
interpolating variation of our sequent system in specific cases, while in
other cases we will stay with Mints' version.
\[
\begin{array}{r@{\hspace{2em}}c@{\hspace{2em}}c}
  \tabrulename{${\seq}{\imp}*$} &
  \X \seq \Y, \nh(A), B & \lnot B, \X \seq \Y, \lnot A\\\cmidrule{2-3}
  & \multicolumn{2}{c}{\X \seq \Y, (A \imp B)}
\end{array}
\]
Soundness of the rule ${\seq}{\imp}*$ follows from the equivalence
\begin{align}
  C \imp (D \lor (A \imp B)) & \xequiv
  (C \imp (D \lor \nh(A) \lor B))
  \land ((\lnot B \land C) \imp (D \lor \lnot A)).
\end{align}

\section{Interpolating Sequent Systems -- Maehara's Method}
\label{sec-maehara-intro}

Our method for Craig-Lyndon interpolation for HT operates by first computing
for two given \htforms $A, B$ a Craig-Lyndon interpolant $C'$ that is a \nhnnf.
This is done with \name{Maehara's method}
\cite{maehara:1960,takeuti:book:1987,smullyan:book:1968,troelstra:schwichtenberg:2000},
a common technique for extracting an interpolant from a sequent proof. We
briefly outline this technique in the context of classical propositional
logic.
As basis we take a sequent system of the G3 family
\cite[Sect.~3.5]{troelstra:schwichtenberg:2000}, where structural rules are
absorbed. Interpolant construction is then specified through axioms and rules
for a generalization of sequents, so-called \defname{split-sequents}, which
have the form
\begin{equation}
  \label{eq-split-sequent}
  \X^\LL, \X^\RR \seqh{H} \Y^\LL, \Y^\RR.
\end{equation}
In a split-sequent each formula occurrence that is a member of the antecedent
or of the succedent (antecedent and succedent are multisets) is labeled with a
\defname{provenance} $\in \{\LL, \RR\}$. We write the multiset of the members
of the antecedent with provenance $\LL$ ($\RR$) as $\X^\LL$ ($\X^\RR$), and,
analogously, the multiset of the members of the succedent with provenance
$\LL$ ($\RR$) as $\Y^\LL$ ($\Y^\RR$). We write $\X$ for $\X^\LL, \X^\RR$ and
$\Y$ for $\Y^\LL, \Y^\RR$.

As usual for G3 systems, axioms and rules are suitable for bottom-up proof
construction that starts with the given goal sequent as root and ends with
axiom instances as leaves. The rules with split-sequents now specify, in
addition, how the provenance labeling of formula occurrences is propagated
upwards from the principal formula in the rule conclusion to the corresponding
active formulas in the premises.

A split-sequent is decorated with a formula $H$, the \name{relative
  interpolant} of the sequent, such that any instance in a proof satisfies the
following conditions.
\begin{align}
  & \textstyle\AX^\LL \entails H \lor \VY^\LL.\label{item-inv-l}\tag{I1}\\
  & \textstyle\AX^\RR \land H \entails \VY^\RR.\label{item-inv-r}\tag{I2}\\
  & \textstyle \vocpn(H) \subseteq \vocpn(\AX^\LL \land \lnot
  \VY^\LL) \cap \vocpn(\lnot \AX^\RR \lor \VY^\RR)\label{item-inv-voc}\tag{I3}
\end{align}
Observe that if $\X = A^\LL$ and $\Y = B^\RR$, then
conditions~(\ref{item-inv-l}--\ref{item-inv-voc}) characterize $H$ as a
Craig-Lyndon interpolant of $A$ and $B$. Thus, a Craig-Lyndon interpolant of
given formulas $A$ and $B$ is obtained as formula $H$ from a proof of the root
sequent
\begin{equation}
  A^\LL \seqh{H} B^\RR.
\end{equation}
Through the decoration with the relative interpolant $H$, the axioms and rules
of the sequent system specify an inductive construction of the Craig-Lyndon
interpolant, which proceeds downwards from leaves, axiom instances, to the
root $A^\LL \seqh{H} B^\RR$. Correctness of the interpolation construction
then means to show that conditions~(\ref{item-inv-l}--\ref{item-inv-voc})
apply to the axioms and, for a rule, if they are assumed for the premises,
then they also hold for the conclusion.

As an example rule consider the interpolating version of ${\seq}{\land}$ for
classical logic (and also HT), where we have two cases, one for each possible
provenance labeling of the principal formula.
\[\begin{array}{l@{\hspace{2em}}c@{\hspace{2em}}c}
\tabrulename{${\seq}{\land}$L} &
\X \seqh{H_1} \Y, A^\LL & \X \seqh{H_2} \Y, B^\LL\\\cmidrule{2-3}
& \multicolumn{2}{c}{\X \seqh{H_1 \lor H_2} \Y, (A \land B)^\LL}
\end{array}
\]
\[\begin{array}{l@{\hspace{2em}}c@{\hspace{2em}}c}
\tabrulename{${\seq}{\land}$R} &
\X \seqh{H_1} \Y, A^\RR & \X \seqh{H_2} \Y, B^\RR\\\cmidrule{2-3}
& \multicolumn{2}{c}{\X \seqh{H_1 \land H_2} \Y, (A \land B)^\RR}
\end{array}
\]
For both rules it is easy to see that syntactic condition~(\ref{item-inv-voc})
transfers from the premises to the conclusion. To show for
\rulename{${\seq}{\land}$L} the transfer of the semantic
condition~(\ref{item-inv-l}) we assume that it holds for the premises, i.e.,
$\AX^\LL \entails H_1 \lor \VY^\LL \lor A$ and $\AX^\LL \entails H_2 \lor
\VY^\LL \lor B$. We can then conclude that it holds for the conclusion:
$\AX^\LL \entails (H_1 \lor H_2) \lor \VY^\LL \lor (A \land B)$.
To show the transfer of the semantic condition~(\ref{item-inv-r}) we assume
$\AX^\RR \land H_1 \entails \VY^\RR$ and $\AX^\RR \land H_2 \entails \VY^\RR$,
and conclude $\AX^\RR \land (H_1 \lor H_2) \entails \VY^\RR$.

\section{An Interpolating Sequent System for HT with NHNNF Interpolants}
\label{sec-interpolating-nhnnf}

We now go into detail of the first stage of our Craig-Lyndon interpolation
method for HT, computing for two given \htforms $A, B$ a Craig-Lyndon
interpolant $C'$ that is a \nhnnf. This is done with an interpolating sequent
system based on a variation of the propositional fragment of Mints' system for
HT from \cite{mints}. In a second stage, from the \nhnnf $C'$ a \htform $C$ is
constructed that is a Craig-Lyndon interpolant of $A, B$. This two-stage
proceeding adapts a technique from \cite{hw:2024}, where a Craig-Lyndon
interpolant of two classical formulas that encode \htforms is postprocessed to
a Craig-Lyndon interpolant that again encodes a \htform. The second stage will
be presented in Sect.~\ref{sec-ht-interpolation}. The result of the first
stage is expressed with the following theorem.
\begin{thm}
  \label{thm-proto-interpolant}
  Let $A,B$ be \htforms such that $A \entails B$. Then there exists
  a \nhnnf $C'$ such that
  \begin{enumerate}
  \item $A \entails C'$ and $C' \entails B$, and
  \item $\vocpn(C') \subseteq \vocpn(A) \cap \vocpn(B)$.
  \end{enumerate}
  Moreover, such a \nhnnf $C'$ can be effectively constructed from a proof of
  $A \entails B$ in a variation of Mints' sequent system for HT.
\end{thm}
We prove this theorem over the rest of this section by specifying an
interpolating sequent system that constructs a suitable \nhnnf $C'$ from given
\htforms $A$ and $B$.

\subsection{An Assumption on the Form of the Second Argument}

\begin{defn}
  \label{def-body-normalize}
  We call a \htform \defname{body-normalized} if no implication occurs in the
  antecedent of another implication.
\end{defn}
We assume that the second argument formula to Craig-Lyndon interpolation, is
\name{body-normalized}. This assumption is w.l.o.g. as any \htform can be
converted to an equivalent one that is body-normalized by rewriting with the
following equivalences.
\begin{align}
  (A\imp B)\imp C & \xequiv (\lnot A\imp C) \land (A \lor \lnot B \lor C) \land (B\imp C).\\
  (A \lor B)\imp C & \xequiv (A\imp C) \land (B\imp C).\\
  (A \land B)\imp C & \xequiv (A\imp C) \lor (B\imp C).\\
  \lnot A \imp B & \xequiv \lnot\lnot A \lor B.
\end{align}  

\subsection{Important Properties of Proofs in the System}

Before presenting the interpolating system, we state properties of all
split-sequents in proofs of a root sequent $A^\LL \seqh{C'} B^\RR$, which
underlie interpolation. This helps in the later presentation of the system
details as it brings their crucial features to attention, although, of course,
technically the stated properties are consequences of these details. Verifying
these properties is not hard but somewhat tedious, typically by inductively
proceeding downwards from leaves to the root, or upwards, from the root to the
leaves. We sketch some of the proofs when presenting axioms and rules in the
context of discussing their crucial features.

\begin{lem}
  \label{lem-h}
  Consider a proof with root sequent
  $A^\LL \seqh{C'} B^\RR$ in the interpolating system of
  Sect.~\ref{sec-sys-axrules}, where $A$ is a \htform and $B$ is a
  body-normalized \htform. The following properties hold for all instances $\X
  \seqh{H} \Y$ of split-sequents in the proof.

  \sublem{lem-h-nhnnf} $H$ is a \nhnnf.

  \sublem{lem-invs} Conditions (\ref{item-inv-l})--(\ref{item-inv-voc})
  (Sect.~\ref{sec-maehara-intro}) are satisfied.\footnote{Although these
  conditions were stated in Sect.~\ref{sec-maehara-intro} for classical logic,
  they apply here if logic operators are understood according to
  Table~\ref{table-truth-tables} and entailment is understood comparative (see
  footnote~\ref{foot-entailment}).}

  \sublem{lem-nh-rr} All occurrences of $\nh$ have provenance $\RR$.

  \sublem{lem-nh-y} All occurrences of $\nh$ are as top-level operators of
  members of\ $\Y$.

  \sublem{lem-nh-arg} Argument formulas of $\nh$ have no occurrences of $\imp$
  and no occurrences of $\nh$.

  \sublem{lem-nopos-rr} All members of\hspace{2pt} $\X^\RR$ are negations,
  i.e., formulas with $\lnot$ as top-level operator.
  
\end{lem}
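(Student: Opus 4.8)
The plan is to prove Lemma~\ref{lem-h} by structural induction on the proof tree, since all six subclaims are statements quantified over "all instances $\X \seqh{H} \Y$ of split-sequents in the proof." The natural direction for each subclaim differs: the syntactic shape conditions (\ref{lem-h-nhnnf}), (\ref{lem-nh-rr})--(\ref{lem-nopos-rr}) are most cleanly established by an \emph{upward} induction, propagating from the root $A^\LL \seqh{C'} B^\RR$ toward the leaves, because the constraints on the root (here $A$ is a general \htform, $B$ is body-normalized, and the initial provenance split is clean) determine what formulas can ever appear above. The invariant conditions (\ref{lem-invs}) — i.e.\ (\ref{item-inv-l})--(\ref{item-inv-voc}) — are instead a \emph{downward} induction from leaves to root, exactly as in Maehara's method sketched in Sect.~\ref{sec-maehara-intro}: one checks that each axiom satisfies the conditions outright and that each rule preserves them from premises to conclusion.

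**First I would** handle the invariant conditions (\ref{lem-invs}). The base cases are the axioms \rulename{Ax-1}, \rulename{Ax-2}, the truth-value axioms, and the $\nh$-axioms \rulename{Ax-nh-1}, \rulename{Ax-nh-2}; for each, an appropriate relative interpolant $H$ is assigned and one verifies (\ref{item-inv-l})--(\ref{item-inv-voc}) directly against the three-valued truth tables of Fig.~\ref{fig-truth-tables}. The inductive step treats each rule of the system in Sect.~\ref{sec-sys-axrules} — crucially the $\nh$-rules, Mints' ${\seq}{\imp}$, and the new ${\seq}{\imp}*$ — in both its $\LL$- and $\RR$-provenance variants, showing that the interpolant combination specified by the rule (disjunctive for $\LL$-principal, conjunctive for $\RR$-principal, as in the ${\seq}{\land}$ example) preserves (\ref{item-inv-l}) and (\ref{item-inv-r}) and that (\ref{item-inv-voc}) transfers syntactically. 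The soundness equivalences already stated in Sect.~\ref{sec-axrules-nh} for $\lnot\nh{\seq}$ and for ${\seq}{\imp}*$ are precisely the algebraic facts that drive these preservation arguments, so this step is tedious but mechanical once each rule's interpolant assignment is fixed.

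**Then I would** establish the structural subclaims (\ref{lem-h-nhnnf}) and (\ref{lem-nh-rr})--(\ref{lem-nopos-rr}) together, since they are tightly coupled. The key observations are: (i) the only way a $\nh$ operator enters a split-sequent is through the $\nh$-axioms and the ${\seq}{\imp}*$ rule, which place $\nh(A)$ on the $\RR$-side of the succedent; (ii) no rule ever moves such an occurrence into the antecedent or onto the $\LL$-side, nor nests it inside an $\imp$ or another $\nh$ — this yields (\ref{lem-nh-rr}), (\ref{lem-nh-y}), and (\ref{lem-nh-arg}) simultaneously. Subclaim (\ref{lem-nopos-rr}), that $\X^\RR$ never contains an unnegated atom, follows from the body-normalization assumption on $B$ (Def.~\ref{def-body-normalize}): because no implication sits in the antecedent of another implication, the decomposition of $B^\RR$ never deposits a bare atom or a complex implication-body into the $\RR$-antecedent; an upward induction shows this property is preserved by every rule. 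Finally (\ref{lem-h-nhnnf}) — that each $H$ is an \nhnnf — combines the axiom assignments (which are literals, $\nh$ of atoms, or constants) with the fact that rules combine interpolants only via $\lor$ and $\land$, staying within the \nhnnf grammar, provided the argument of every $\nh$ is already an atom, which is guaranteed by (\ref{lem-nh-arg}) together with the inward-moving $\nh$-equivalences.

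**The main obstacle** I expect is the interdependence of the subclaims and the need to process the full rule set uniformly: the \nhnnf shape of $H$ in (\ref{lem-h-nhnnf}) relies on (\ref{lem-nh-arg}), while the correctness of the interpolant assignments in (\ref{lem-invs}) for the $\nh$- and $\imp$-rules relies on the three-valued validity of the stated soundness equivalences and on $\nh$ occurrences being confined to the $\RR$-succedent as per (\ref{lem-nh-rr})--(\ref{lem-nh-y}). The cleanest organization is therefore to prove the purely syntactic provenance and placement facts (\ref{lem-nh-rr})--(\ref{lem-nopos-rr}) first by upward induction, then use them as standing hypotheses when verifying (\ref{lem-h-nhnnf}) and (\ref{lem-invs}); the real care is in the ${\seq}{\imp}*$ rule, where the premise $\lnot B, \X \seq \Y, \lnot A$ introduces a negated subformula whose interaction with the $\nh$-placement invariant must be checked explicitly against the three-valued tables rather than by classical reasoning.
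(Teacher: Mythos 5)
Your overall strategy---a downward (leaves-to-root) Maehara-style induction for conditions (\ref{item-inv-l})--(\ref{item-inv-voc}) and an upward (root-to-leaves) induction for the syntactic subclaims---is exactly the paper's, and most of your observations match the intended argument. However, you have swapped the roles of the two key design ingredients, and as stated two of your subproofs would not go through. First, you derive Lemma~\ref{lem-nopos-rr} from body-normalization of $B$. That is not where it comes from: body-normalization does nothing to prevent an unnegated atom from entering $\X^\RR$ (an atomic implication antecedent is perfectly body-normalized). The actual reason is that the root antecedent is the single $\LL$-labeled formula $A^\LL$ and that \emph{no rule of the system introduces an unnegated formula with provenance $\RR$ into the antecedent of a premise}; in particular this is precisely why Mints' rule ${\seq}{\imp}$, whose left premise puts the unnegated implication antecedent into the sequent antecedent, is replaced for $\RR$-provenance by \rulename{${\seq}{\imp}*$R}, whose premises only add $\nh(A)^\RR$, $B^\RR$, $(\lnot B)^\RR$ and $(\lnot A)^\RR$. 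Your upward induction would have to check exactly this rule-by-rule fact, and the reason you give does not supply it.

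Second, your justification of Lemma~\ref{lem-nh-arg} (``no rule nests a $\nh$ occurrence inside an $\imp$ or another $\nh$'') addresses only what happens to $\nh$-occurrences after they exist, not what sits \emph{inside} a freshly introduced $\nh$. The occurrence $\nh(A)$ created by \rulename{${\seq}{\imp}*$R} has as its argument the antecedent $A$ of an implication $(A \imp B)^\RR$ descended from the root formula $B$; it is \emph{here} that body-normalization is needed, to guarantee that this argument contains no implication. So the hypothesis you spent on Lemma~\ref{lem-nopos-rr} is in fact the one that proves Lemma~\ref{lem-nh-arg}. Once these two attributions are corrected, the rest of your plan (axiom-wise verification of (\ref{item-inv-l})--(\ref{item-inv-voc}) against the three-valued tables, closure of the \nhnnf grammar under the $\lor$/$\land$ combinations performed by the rules, noting also that axioms may contribute $\lnot\lnot A$, which the \nhnnf grammar admits) coincides with the paper's proof sketch.
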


Theorem~\ref{thm-proto-interpolant} then follows immediately from
Lemmas~\ref{lem-h-nhnnf} and~\ref{lem-invs}.

\subsection{Axioms and Rules of the Interpolating System}
\label{sec-sys-axrules}

\paragraph{Axioms.}

We need the following split-sequent variations of \rulename{Ax-1}
(Sect.~\ref{sec-mints-axioms}).
\[
\begin{array}{l@{\hspace{2em}}l@{\hspace{2em}}l}
  {\small \rulename{Ax-1-LL}} & A^\LL, \X \seqh{\false} \Y, A^\LL & A \text{ a literal}\\
  {\small \rulename{Ax-1-LR}} & A^\LL, \X \seqh{A} \Y, A^\RR & A \text{ a literal}\\
  {\small \rulename{Ax-1-RL}} & (\lnot A)^\RR, \X \seqh{\lnot \lnot A} \Y, (\lnot A)^\LL & A \text{ an atom}\\
  {\small \rulename{Ax-1-RR}} & (\lnot A)^\RR, \X \seqh{\true} \Y, (\lnot
  A)^\RR & A \text{ an atom}
\end{array}
\]
Axioms \rulename{Ax-1-RL} and \rulename{Ax-1-RR} apply only to instances of
axiom~\rulename{Ax-1} where the principal formulas are \emph{negative}
literals. By Lemma~\ref{lem-nopos-rr}, instances where an atom with provenance
$\RR$ is a member of the antecedent are superfluous. For the provenance
labeling of \rulename{Ax-1-RL} this restriction to negative literals is
crucial since for a split-sequent $A^\RR, \X \seqh{H} \Y, A^\LL$ in the case
where $A$ is an atom there actually is no formula $H$ that satisfies the
conditions (\ref{item-inv-l})--(\ref{item-inv-voc}), which characterize the
relative interpolant. `No formula' is quite broad here, even beyond the
considered formula classes: there is no possible logic operator
$\mathit{op}(A)$ that would map atom~$A$ depending on its truth value to
another truth value such that $\mathit{op}(A)$ provides a relative
interpolant~$H$.
For, e.g., \rulename{Ax-1-RL} conditions~(\ref{item-inv-l}) and
(\ref{item-inv-r}) instantiate to $\AX^\LL \entails \lnot\lnot A \lor \VY^\LL
\lor \lnot A$, and $\lnot A \land \AX^\RR \land \lnot \lnot A \entails
\VY^\RR$.

Of axiom \rulename{Ax-2} we need the following split-sequent variations.
\[
\begin{array}{l@{\hspace{2em}}l@{\hspace{2em}}l}
  {\small \rulename{Ax-2-LL}} & A^\LL, (\lnot A)^\LL, \X \seqh{\false} \Y & A \text{ an atom}\\
  {\small \rulename{Ax-2-LR}} & A^\LL, (\lnot A)^\RR, \X \seqh{A} \Y & A \text{ an atom}\\
  {\small \rulename{Ax-2-LR$'$}} & A^\LL, (\lnot A)^\RR, \X \seqh{\lnot \lnot A} \Y & A \text{ an atom}\\
\end{array}
\]
Instances of axiom~\rulename{Ax-2} with provenance $A^\LL$ and $(\lnot A)^\RR$
have both $A$ and also $\lnot \lnot A$ as interpolants. Thus either one of
axioms \rulename{Ax-2-LR} or \rulename{Ax-2-LR$'$} can be chosen
alternatively. 
For, e.g., \rulename{Ax-2-LR}, conditions~(\ref{item-inv-l}) instantiate to $A
\land \AX^\LL \entails A \lor \VY^\LL$ and $\AX^\RR \land \lnot A \land A
\entails \VY^\RR$. For \rulename{Ax-2-LR$'$} they instantiate to $A \land
\AX^\LL \entails \lnot \lnot A \lor \VY^\LL$ and $\AX^\RR \land \lnot A \land
\lnot \lnot A \entails \VY^\RR$.
Axioms \rulename{Ax-nh-1} and \rulename{Ax-nh-2}
(Sect.~\ref{sec-axrules-nh}) are needed in the following split-sequent
variations.
\[
\begin{array}{l@{\hspace{2em}}l@{\hspace{2em}}l}
  {\small \rulename{Ax-nh-1-LR}} & \X \seqh{\nh(A)} \Y, A^\LL, \nh(A)^\RR  & A \text{ an atom}\\
  {\small \rulename{Ax-nh-1-RR}} & \X \seqh{\true} \Y, A^\RR, \nh(A)^\RR  & A \text{ an atom}\\
  {\small \rulename{Ax-nh-2-LR}} & (\lnot A)^\LL, \X \seqh{\lnot A} \Y,
  \nh(A)^\RR  & A \text{ an atom}\\
  {\small \rulename{Ax-nh-2-RR}} & (\lnot A)^\RR, \X \seqh{\true} \Y,
  \nh(A)^\RR  & A \text{ an atom}\\
\end{array}
\]
That other potential variations of the axioms for $\nh$ are superfluous
follows from Lemmas~\ref{lem-nh-rr} and~\ref{lem-nh-y}.
Instances of \rulename{Ax-nh-1-LR} introduce subformulas $\nh(A)$ into the
constructed interpolant. Actually this is the only place where $\nh$ is
introduced into the interpolant. For, e.g., \rulename{Ax-nh-1-LR}
conditions~(\ref{item-inv-l}) and (\ref{item-inv-r}) instantiate to $\AX^\LL
\entails \nh(A) \lor \VY^\LL \lor A$ and $\X^\RR \land \nh(A) \entails \VY^\RR
\lor \nh(A)$.

\paragraph{Implication Rules.}

For provenance $\LL$ of the principal formula, we use Mint's rules
${\imp}{\seq}$ and ${\seq}{\imp}$.
\[
\begin{array}{r@{\hspace{2em}}c@{\hspace{2em}}c@{\hspace{2em}}c}
  \tabrulename{${\imp}{\seq}$L} &
  \lnot A^\LL, \X \seqh{H_1} \Y & \X \seqh{H_2} \Y, A^\LL, (\lnot B)^\LL & B^\LL,\X \seqh{H_3} \Y\\\cmidrule{2-4}
  & \multicolumn{3}{c}{(A \imp B)^\LL, \X \seqh{H_1 \lor H_2 \lor H_3} \Y}
\end{array}
\]
\[
\begin{array}{r@{\hspace{2em}}c@{\hspace{2em}}c}
  \tabrulename{${\seq}{\imp}$L} &
  A^\LL, \X \seqh{H_1} \Y, B^\LL & (\lnot B)^\LL, \X \seqh{H_2} \Y, (\lnot A)^\LL\\\cmidrule{2-3}
  & \multicolumn{2}{c}{\X \seqh{H_1 \lor H_2} \Y, (A \imp B)^\LL}
\end{array}
\]
For provenance $\RR$ of the principal formula, we note that a version of
${\imp}{\seq}$, i.e., for implication with provenance $\RR$ in the antecedent
is by Lemma~\ref{lem-nopos-rr}, superfluous.
For implication with provenance $\RR$ in the succedent we use our new
implication rule ${\seq}{\imp}*$.\footnote{It seems possible to use
${\seq}{\imp}*$ instead of ${\seq}{\imp}$ also for $\LL$-provenance, with
appropriate modifications of other rules and axioms as well as
Lemma~\ref{lem-h}. Our choice of ${\seq}{\imp}$ for $\LL$-provenance
tries to introduce $\nh$ not more than necessary.}
\[
\begin{array}{r@{\hspace{2em}}c@{\hspace{2em}}c}
  \tabrulename{${\seq}{\imp}*$R} &
  \X \seqh{H_1} \Y, \nh(A)^\RR, B^\RR & (\lnot B)^\RR, \X \seqh{H_2} \Y, (\lnot A)^\RR\\\cmidrule{2-3}
  & \multicolumn{2}{c}{\X \seqh{H_1 \land H_2} \Y, (A \imp B)^\RR}
\end{array}
\]

Rule~\rulename{${\seq}{\imp}*$R} is the only rule that, read bottom-up,
introduces $\nh$ into a sequent. The introduced occurrence has provenance
$\RR$, which implies Lemma~\ref{lem-nh-rr}.
The occurrence is placed as a member of the succedent. Lemma~\ref{lem-nh-y}
follows since there is no rule that would move a formula with $\nh$ as
top-level operator from the succedent to an occurrence in the antecedent.
The argument formula of the introduced occurrences of $\nh$ stems from the
antecedent of an implication with provenance $\RR$. Since the given root
succedent $B^\RR$ is body-normalized, it follows that there is no implication
occurring in this argument formula. This implies Lemma~\ref{lem-nh-arg}.

We observe that whenever our rules introduce a formula with provenance $\RR$
into the antecedent of a premise, the formula is a negation. Since the root
sequent has just a single formula with provenance $\LL$ as antecedent, this
implies Lemma~\ref{lem-nopos-rr}.
The statements underlying the transfer of conditions~(\ref{item-inv-l})
and~(\ref{item-inv-r}) from the premises to the conclusion are for the
implication rules as follows.

\newcolumntype{L}{>{$}l<{$}}

\noindent
{\setlength{\arraycolsep}{4pt}
\begin{longtable}{L@{\hspace{12pt}}L}
\text{\rulename{${\imp}{\seq}$L}, \text{(\ref{item-inv-l})}} &
\begin{array}[t]{ll}
\text{If } & \lnot A \land \AX^\LL \entails H_1 \lor  \VY^\LL \text{ and }
 \AX^\LL \entails H_2 \lor \VY^\LL \lor A \lor \lnot B \text{ and}\\
 & B \land \AX^\LL \entails H_3 \lor  \VY^\LL, \text{ then }
   (A \imp B) \land \AX^\LL \entails (H_1 \lor H_2 \lor H_3)  \lor
\VY^\LL.
\end{array}\\[18pt]
\text{\rulename{${\imp}{\seq}$L}, \text{(\ref{item-inv-r})}} &
\begin{array}[t]{ll}
\text{If} & \AX^\RR \land H_1 \entails \VY^\RR \text{ and }
\AX^\RR \land H_2 \entails \VY^\RR \text{ and }
\AX^\RR \land H_3 \entails \VY^\RR, \text{ then}\\
& \hspace{2em} \AX^\RR \land (H_1 \lor H_2 \lor H_3) \entails \VY^\RR.
\end{array}\\[18pt]
\text{\rulename{${\seq}{\imp}$L}, \text{(\ref{item-inv-l})}} &
\begin{array}[t]{ll}
\text{If} & A \land\AX^\LL \entails (H_1 \lor \VY^\LL \lor B) \text{ and }
   \lnot B \land\AX^\LL \entails (H_2 \lor \VY^\LL \lor \lnot A), \text{ then}\\
& \hspace{2em} \AX^\LL \entails (H_1 \lor H_2) \lor \VY^\LL \lor (A\imp B).
\end{array}\\[18pt]
\text{\rulename{${\seq}{\imp}$L}, \text{(\ref{item-inv-r})}} &
\begin{array}[t]{ll}
  \text{If} & \AX^\RR \land H_1 \entails \VY^\RR \text{ and }
     \AX^\RR \land H_2 \entails \VY^\RR, \text{ then}\\
& \hspace{2em} \AX^\RR \land (H_1 \lor H_2) \entails \VY^\RR.
\end{array}\\[18pt]
\text{\rulename{${\seq}{\imp}*$R}, \text{(\ref{item-inv-l})}} &
\begin{array}[t]{ll}
\text{If} & \AX^\LL \entails H_1 \lor \VY^\LL \text{ and }
\AX^\LL \entails H_2 \lor \VY^\LL, \text{ then}\\
& \hspace{2em} \AX^\LL \entails (H_1 \land H_2)  \lor  \VY^\LL.
\end{array}\\[18pt]
\text{\rulename{${\seq}{\imp}*$R}, \text{(\ref{item-inv-r})}} &
\begin{array}[t]{ll}
\text{If} & \AX^\RR \land H_1 \entails \VY^\RR \lor \nh(A) \lor B \text{ and }
\lnot B \land \AX^\RR \land H_2 \entails \VY^\RR \lor \lnot A, \text{ then}\\
& \hspace{2em} \AX^\RR \land (H_1 \land H_2) \entails \VY^\RR  \lor (A\imp B).
\end{array}
\end{longtable}
}

\paragraph{Further Axioms and Rules that are not Presented.}

The following axioms and rules also have to be included in our interpolating
proof system. They are omitted in this presentation as they are
straightforward.

\begin{itemize}
\item Axioms and rules for the truth value constants.

\item Familiar rules for conjunction and disjunction. With the possible
  provenances of the principal formula these are 8 rules. Two of them are
  shown as examples in Sect.~\ref{sec-maehara-intro}.

\item 4 rules for $\lnot \lnot$.

\item 12 rules for moving negation inward.\footnote{As noted in
\cite{otten:schaub:2025}, there is a misprint in \cite{mints}: The correct
premises of the ${\seq}{\lnot\lor}$ rule are $\X \seq \Y, \lnot A$ and $\X
\seq \Y, \lnot B$.}
  
\item 1 rule for eliminating $\nh$ over negation, based on
  equivalence~(\ref{eq-nh-in-not}). By Lemma~\ref{lem-nh-rr}
  and~\ref{lem-nh-y}, this is only needed for the principal formula in the
  succedent and with provenance $\RR$. To effect at bottom-up application a
  reduction of the number of logic operators, required to ensure termination,
  the rule incorporates elimination of the double negation, i.e., it is
  \[\begin{array}{c}
  (\lnot A)^\RR, \X \seqh{H} \Y\\\midrule \X \seqh{H} \Y, \nh(\lnot A)^\RR
  \end{array}
  \]

\item 2 rules for moving $\nh$ inward, based on
  equivalences~(\ref{eq-nh-in-and}) and~(\ref{eq-nh-in-or}). By
  Lemma~\ref{lem-nh-rr} and~\ref{lem-nh-y}, these are only needed for the
  principal formula in the succedent and with provenance $\RR$.

\item By Lemma~\ref{lem-nh-y} the rule $\lnot\nh{\seq}$ is superfluous.
  
\end{itemize}

\subsection{Completeness}
\label{sec-completeness}

Completeness of our interpolating sequent system can be shown similarly as
outlined by Mints \cite{mints}. Termination is ensured since, applied
bottom-up, each of the rules either reduces the number of occurrences of
binary logic operators or keeps that number but reduces the number of
occurrences of unary operators.
We note that whenever $\X \seqh{H} \Y$ is a split-sequent that meets the
conditions of Lemmas~\ref{lem-nh-rr}--\ref{lem-nopos-rr} and $\X$ or $\Y$ has
a member that is not of the form $A$, $\lnot A$, or $\nh(A)$, with $A$ an
atom, then some rule is applicable bottom-up to the split-sequent. For a leaf
sequent $\X \seqh{H} \Y$ to which no rule is applicable and which is no
instance of an axiom we can construct a refuting model by assigning truth
values to all atoms $A$ as follows.
\begin{center}
\begin{tabular}{l}
  If $A \in \X$, then $A := \T$,\\
  else if $\nh(A) \in \Y$, then $A := \T$,\\
  else if $\lnot A \in \Y$, then $A := \NF$,\\
  else $A := \F$.
\end{tabular}
\end{center}
We obtain a truth value assignment under which $\AX$ has the value $\T$, while
$\VY$ has the value $\NF$ or the value $\F$. Hence, the value of $\AX \imp
\VY$ must be $\F$ or $\NF$ but cannot be $\T$.

Sine all our rules are invertible and sound, the formula $A \imp B$ for the
two interpolation arguments $A$ and $B$ is equivalent to the conjunction of
$\bigwedge \X \imp \bigvee \Y$ for all leaves of the proof tree that are not
closed by an axiom. Thus, if one of these conjuncts has the value $\F$ or
$\NF$, it follows that the value of $A \imp B$ can only be $\F$ or $\NF$ but
not $\T$.

\section{Craig-Lyndon Interpolation for HT}
\label{sec-ht-interpolation}

We now turn our attention to the second stage of our Craig-Lyndon
interpolation method for HT, the conversion of the interpolating \nhnnf
formula $C'$ obtained according to Theorem~\ref{thm-proto-interpolant} to a
Craig-Lyndon interpolant $C$ that is a \htform. This conversion is based on
the following theorem.
\begin{thm}
  \label{thm-strengthening}
  Let $A$ be a \htform and let $C'$ be a \nhnnf such that $A \entails C'$.
  Then there exists a \htform~$C$ such that $A \entails C$, $C \entails C'$
  and $\vocpn(C) \subseteq \vocpn(C')$. Moreover, such a \htform $C$ can be
  effectively constructed from $C'$.
\end{thm}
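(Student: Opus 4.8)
The plan is to construct $C$ by rewriting $C'$ so as to eliminate every occurrence of $\nh$ in favour of a genuine HT implication, and then to verify the three claims. The decisive rewrite turns a \emph{positive} subformula of the shape $\nh(P) \lor R$ into $(P \imp R)$; a bare occurrence $\nh(P)$ is read as $\nh(P) \lor \false$ and hence becomes $P \imp \false$, that is $\lnot P$. To bring every $\nh$ into this shape I would first apply the $\nh$-combining equivalences $\nh(A) \lor \nh(B) \equiv \nh(A \land B)$ and $\nh(A) \land \nh(B) \equiv \nh(A \lor B)$ from Sect.~\ref{sec-axrules-nh}, so that the surviving $\nh$'s appear as single disjuncts with an $\nh$-free remainder $R$, and each argument $P$ is a conjunction of atoms. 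Performing these steps bottom-up is clearly effective and introduces no new atoms, so the resulting \htform $C$ satisfies the vocabulary requirement $\voc(C) \subseteq \voc(C')$.

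That $C \entails C'$ is one of the two easy points. It rests on the pointwise entailment $(A \imp B) \entails (\nh(A) \lor B)$, which I would read off the truth tables of Fig.~\ref{fig-truth-tables}: the two sides coincide except when $A$ has value $\NF$ and $B$ has value $\F$, where the implication is $\F$ while the disjunction is $\T$. Since each rewritten occurrence is positive, monotonicity of positive contexts lifts the inequality to $C \entails C'$. For the half $A \entails C$ I would split the value of any formula into its ``there'' part and its ``here'' part, using the basic fact that for an $\nh$-free formula $F$ and an assignment $v$ the value of $F$ differs from $\F$ exactly when $F$ is classically true under the assignment $\hat v$ that replaces every value $\NF$ by $\T$. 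The rewrite $\nh(A) \lor B \rightsquigarrow (A \imp B)$ is a classical tautology (reading $\nh$ as ordinary negation), so $C$ and $C'$ agree under all classical assignments; evaluating the hypothesis $A \entails C'$ at $\hat v$ therefore shows that $A$ is there-true only where $C$ is, i.e. the ``there'' half of $A \entails C$ holds at every $v$.

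The substance of the proof is the ``here'' half: $A$ must not have value $\T$ where $C$ has value $\NF$. For a single outermost occurrence this is a short case analysis. Writing the conclusion as $P \imp R$, the only case in which its value can drop below that of $A$ is $P$ having value $\NF$ and $R$ having value $\F$; there $P \imp R$ is $\F$. At the associated classical assignment $\hat v$ the argument $P$ becomes $\T$, so $\nh(P)$ becomes $\F$, while $R$ stays false, whence $\nh(P) \lor R$ is false under $\hat v$; the hypothesis $A \entails C'$ then forces $A$ false under $\hat v$, so $A$ has value $\F$ under $v$ and nothing is violated. When instead $P$ has value $\T$, we have $\nh(P) = \F$ and $\nh(P) \lor R$ reduces to $R$, so $A \entails C'$ gives the required bound directly.

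The hard part, and the step I expect to be the main obstacle, is to carry this reasoning through for a \emph{nested} positive occurrence, where the changed subformula sits inside a positive context $D[\cdot]$ and the hypothesis speaks about $D[\nh(P) \lor R]$ rather than about $\nh(P) \lor R$ itself. Here one must show that the implications introduced by the rewrite restore ``here''-truth at exactly the assignments at which $A$ demands it; I would do this by structural induction on $C'$, with the ``there'' half and the single-step computation above as the base of the argument, the interesting cases being disjunction (where the $\nh$-disjuncts are merged and converted) and conjunction. The delicate point to respect is that ``here''-truth is genuinely two-world: one cannot simply collapse both worlds to the ``here'' assignment, because a formula such as $\lnot\lnot p$ can be here-true purely on account of the ``there'' world, so the critical assignments must be discharged through $\hat v$ as above, together with the monotonicity of the value of $A$ between the two worlds, rather than by a naive classical reduction. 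Granting this, $A \entails C$, $C \entails C'$, and $\voc(C) \subseteq \voc(C')$ all hold and $C$ is obtained effectively, which is Theorem~\ref{thm-strengthening}.
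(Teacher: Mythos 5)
Your construction has a genuine gap, and in fact the procedure as you describe it produces a wrong $C$ on some inputs. The claim that the two $\nh$-combining equivalences suffice to bring every occurrence of $\nh$ into the shape of a single disjunct $\nh(P)$ with an $\nh$-free co-disjunct $R$ is false: in $C' = (\nh(p) \land q) \lor r$ neither equivalence applies, $\nh(p)$ is a conjunct rather than a disjunct, and your fallback rule for ``bare'' occurrences yields $C = (\lnot p \land q) \lor r$. This $C$ is not entailed by every \htform entailing $C'$: take $A = (p \imp r) \land q$. One checks from the truth tables that $A \entails C'$, but at the assignment $p = \NF$, $q = \T$, $r = \NF$ one gets $A = \T$ while $C = (\F \land \T) \lor \NF = \NF$, so $A \imp C$ has value $\NF$ and $A \not\entails C$. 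The underlying reason is that your single-occurrence ``here half'' argument does not survive arbitrary positive contexts: the rewrite $\nh(P) \rightsquigarrow \lnot P$ (or $\nh(P) \lor R \rightsquigarrow P \imp R$) lowers the value at $P = \NF$ from $\T$ to $\F$, and while at the top level such a drop can be discharged through the totalized assignment $\hat v$ (falsity of the whole formula is a ``there''-statement), inside a context $(\cdot) \lor r$ with $r = \NF$ it merely lowers the whole formula from $\T$ to $\NF$, a genuine ``here''-violation that $\hat v$ cannot detect. So the structural induction you defer to (``Granting this'') cannot be carried out; the inductive statement it would have to establish is false.

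The missing idea is the paper's first step: convert $C'$ to CNF, so that every $\nh(E_i)$ really is a top-level disjunct of its clause $D = \nh(E_1) \lor \ldots \lor \nh(E_m) \lor F$, and then replace each whole clause by $E_1 \land \ldots \land E_m \imp F$. That implication is equivalent to $D \land \lnot\lnot D'$, where $D'$ replaces each $\nh(E_i)$ by $\lnot E_i$; i.e., it strengthens $D$ only by a $\lnot\lnot$-prefixed, purely ``there''-level conjunct, which is exactly what can be transferred from $A \entails D$ when $A$ is a \htform. The paper justifies this transfer by a proof transformation in Mints' system (replacing $\nh(E_i)$ by double negation throughout a proof of $\lnot\lnot A \seq \lnot\lnot D$); your $\hat v$ argument is the semantic counterpart and would also work at the clause level. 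Indeed, restricted to a CNF clause your rewrite coincides with the paper's, and your verifications of $C \entails C'$, of the vocabulary condition, and of the ``there'' half are fine. What breaks is only the attempt to avoid the CNF conversion by local rewriting plus induction on positive contexts.
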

\begin{proof}
  We show the construction of a suitable \htform $C$ from the \nhnnf $C'$.
  First $C'$ is converted to an equivalent CNF, which can be done with the
  familiar conversions for classical propositional logic. Let $D = \nh(E_1)
  \lor \ldots \lor \nh(E_m) \lor F$, with $m \geq 0$, where $F$ has no
  occurrences of $\nh$, be an arbitrary clause of this CNF. It holds that $A
  \entails D$. Hence $\lnot \lnot A \entails \lnot \lnot D$. Hence there is a
  proof $P$ of $\lnot \lnot A \entails \lnot \lnot D$ in Mints' system
  extended with axiom~\rulename{Ax-nh-2} and rule~$\lnot\nh{\seq}$ defined in
  Sect.~\ref{sec-axrules-nh} (completeness of this system for the considered
  implication formulas follows with the model construction of
  Sect.~\ref{sec-completeness}). Note that axiom~\rulename{Ax-nh-1} is not
  required: since the root sequent is $\lnot \lnot A \seq \lnot \lnot D$, if
  no rule is applicable to a sequent $\X \seq \Y$, then each member of $\X,\Y$
  is either of the form $\lnot p$ or of the form $\nh(p)$, with $p$ an atom.
  Let $D'$ be $D$ with all formulas of the form $\nh(E_i)$ replaced by $\lnot
  E_i$. We modify the proof $P$ to a proof $P'$ of $\lnot \lnot A \entails
  \lnot \lnot D'$ in Mints' original system as follows: (1) Throughout the
  proof replace all formulas of the form $\nh(E_i)$ with $\lnot E_i$.
  (2) Readjust rule labels: Leaves that were an instance of \rulename{Ax-nh-2}
  are now instances of \rulename{Ax-1}; applications of rule
  \rulename{${\lnot}\nh{\seq}$} become applications of Mints' rule
  \rulename{${\lnot}{\lnot}{\seq}$}. Proof $P'$ thus justifies $\lnot \lnot A
  \entails \lnot \lnot D'$. Since $A \entails \lnot \lnot A$, it follows that
  $A \entails \lnot \lnot D'$. Thus, $A \entails D \land \lnot \lnot D'$. The
  entailed conjunction $D \land \lnot \lnot D'$ is
  \begin{align}(\nh(E_1) \lor \ldots \lor \nh(E_m) \lor F)
    \land \lnot \lnot (\lnot E_1 \lor \ldots \lor \lnot E_m \lor F),
  \end{align}
  which is equivalent to
  \begin{align}(\nh(E_1) \lor \ldots \lor \nh(E_m) \lor F)
    \land (\lnot E_1 \lor \ldots \lor \lnot E_m \lor \lnot \lnot F),
  \end{align}
  which is equivalent to the \htform
  \begin{align}E_1 \land \ldots \land E_m \imp F.
  \end{align}
 Hence, the conjunction of the implications $E_{1} \land \ldots \land E_{m_j}
 \imp F_j$ for all clauses $D_j = \nh(E_{1}) \lor \ldots \lor \nh(E_{m_j}) \lor
   F_j$ of the CNF of $C'$ provides the \htform $C$ as claimed in the
   theorem statement.
   \qed
\end{proof}

Although the proof of Theorem~\ref{thm-strengthening} refers to both
formulas $A$ and $C'$, the \emph{construction} of $C$ is actually just from
$C'$, independently from $A$. The sequent proofs $P$ and $P'$ referenced in
the proof of the theorem are for the purpose of justifying properties
of~$C$ but not used in the actual construction of~$C$.

Theorem~\ref{thm-strengthening} is in essence an adaptation of the final step
in the proof of Theorem~11 in \cite{hw:2024}. In this step, an
LP-interpolant~$H$ that encodes a logic program is obtained from a
Craig-Lyndon interpolant $H'$ for classical formulas that encode logic
programs as $H := H' \land \f{rename}_{0 \mapsto 1}(H')$.

\pagebreak
We are now ready to state the overall result, Craig-Lyndon interpolation for HT on
the basis of Mints' sequent system.
\begin{thm}
  \label{th-interpolant}
  Let $A,B$ be \htforms such that $A \entails B$. Then there exists a \htform
  $C$ such that
  \begin{enumerate}
  \item $A \entails C$ and $C \entails B$, and
  \item $\vocpn(C) \subseteq \vocpn(A) \cap \vocpn(B)$.
  \end{enumerate}
  Moreover, such a \htform $C$ can be effectively constructed from a proof of
  $A \entails B$ in a variation of Mints' sequent system for HT.
\end{thm}

\begin{proof} Follows from Theorems~\ref{thm-proto-interpolant} and~\ref{thm-strengthening}.
\qed
\end{proof}

\section{Conclusion}
\label{sec-conclusion}

We have introduced a method to construct Craig-Lyndon interpolants for HT,
which proceeds in two stages, extraction of a preliminary interpolant in an
extended logic with Maehara's method, followed by a conversion to an
interpolant in HT.
The method is an adaptation of an encoding-based interpolation method for HT
implemented by automated first-order provers \cite{hw:2024}.
Our sequent calculus reformulation aims at linking to research threads in
proof theory and supporting potential applications where the focus are the
proofs themselves, for example as explanations, in contrast to the
interpolating formulas.

A recent observation \cite{saurin:proofrelevant} relates to the focus on
proofs: Maehara's method can be refined to \name{proof relevant
  interpolation}, a way of cut introduction, or proof splitting. As our
preliminary interpolant is obtained with Maehara's method, this refinement
should transfer to our first stage. Incorporation and apparent necessity of
the second stage, the conversion to HT, has to be investigated.

For HT, the distinction between \name{positive} and \name{negative} polarity
underlying Craig-Lyndon interpolation suggests refinement. If a HT formula is
normalized to a ``logic program'' (a conjunction of implications $A \imp B$,
with $A$ a conjunction of literals and $B$ a disjunction of literals), then
\emph{four} roles may be distinguished for an atom occurrence: positively in
$A$, negated in $A$, positively in $B$, and negated in $B$. It seems that
roles of atom occurrences in interpolants can be ensured to be in certain
relationships with the roles in the interpolated formulas, similarly as
discussed for Beth definability in \cite[Sect.~3.4]{hw:2024}.

Two steps in our interpolation method for HT involve formula transformations
that seem not to be polynomially restricted: Lemma~\ref{lem-h} requires the
second interpolation input to be \name{body-normalized}, and the
postprocessing according to Theorem~\ref{thm-strengthening} involves
transformation to CNF. The actual impact of this is not clear. Logic programs
are typically already in body-normalized form. Possibly there are more
sensible transformations such as a relaxed body-normalized condition and an
adaptation of the Tseitin encoding. For methods avoiding non-polynomial
formula transformations, the question arises whether effort has just been
passed to proving, with proof steps now performing the work of the
transformation.

\begin{acknowledgments}
The author thanks David Pearce for the conversations that led to this work.
Funded by the Deutsche Forschungsgemeinschaft (DFG, German Research
Foundation) -- Project-ID~457292495.
\end{acknowledgments}

\section*{Declaration on Generative AI}
The author(s) have not employed any Generative AI tools.

\bibliography{bibcraig}

\closeout\plabelsfile
\closeout\plabelslogfile

\end{document}